\documentclass[journal]{IEEEtran}

\usepackage{amsmath}
\usepackage{amssymb}
\usepackage{amsthm}
\usepackage{graphicx}
\usepackage{epstopdf}
\usepackage{cite}
\usepackage{color,array}

\newtheorem{proposition}{Proposition}
\begin{document}

\title{A Closed-Form Noise-Sensitivity Asymmetry for Causal Branch Selection in Minimal-Array TDoA Localization}

\author{Abeer~Nasir~Chaudhry,
        Salman~Liaquat,~\IEEEmembership{Senior~Member,~IEEE,}
        and~Hasan~Saeed~Mir,~\IEEEmembership{Senior~Member,~IEEE}%
\thanks{This work is under review as a Correspondence in the IEEE Transactions on Aerospace and Electronic Systems. This work has been submitted to the IEEE for possible publication. Copyright may be transferred without notice, after which this version may no longer be accessible. (\textit{Corresponding author: Hasan Mir.})}%
\thanks{A. N. Chaudhry and H. S. Mir are with the American University of Sharjah, Sharjah, United Arab Emirates (e-mail: b00085496@aus.edu; hmir@aus.edu).}%
\thanks{S. Liaquat is with the School of Electrical and Electronic Engineering, Universiti Sains Malaysia, Penang, Malaysia (e-mail: salmanliaquat@ieee.org).}%
\thanks{Color versions of one or more figures in this article are available online at http://ieeexplore.ieee.org.}}

\markboth{IEEE Transactions on Aerospace and Electronic Systems}{Chaudhry \MakeLowercase{\textit{et al.}}: Closed-Form Noise-Sensitivity Asymmetry for Causal Branch Selection in Minimal-Array TDoA Localization}

\maketitle

\begin{abstract}
Minimal-array time-difference-of-arrival (TDoA) localization with three planar receivers reduces to a scalar quadratic whose two roots can both be feasible target positions, leaving a branch-selection ambiguity intrinsic to the minimal geometry. Because a minimal array is the cheapest, most deployable passive configuration, this ambiguity is conventionally broken only by adding a fourth receiver or an angle sensor, sacrificing the very minimality that motivates the array. This article resolves it from the measurements alone. Implicit differentiation shows that the two roots share an identical sensitivity denominator, so classical root conditioning cannot separate them and the entire asymmetry resides in the numerator. A closed-form analysis then yields an exact sign condition for the asymmetry, and over the feasible interior the more sensitive root is the outer, physical one, away from two algebraic degeneracy loci. A causal, constant-memory selector smooths a per-root variability statistic, whose expectation is shown to be proportional to the per-root sensitivity in the noise-dominated regime, and selects the larger one. Across a dimensionless receiver atlas the physical root is the more sensitive at a median of 85\% of two-feasible-root operating points (median sensitivity ratio 16), and, because the more sensitive root is the outer one, a simple outer-root rule attains near-0.99 branch-selection accuracy on this interior, matched online by a causal, constant-memory realization. The decisive empirical finding is that smoothness- and continuity-based disambiguation, the natural alternatives, invert under timing noise and fall below chance.
\end{abstract}

\begin{IEEEkeywords}
Branch selection, noise sensitivity analysis, passive emitter localization, solution disambiguation, time-difference-of-arrival.
\end{IEEEkeywords}

\IEEEpeerreviewmaketitle

\section{INTRODUCTION}
Time-difference-of-arrival (TDoA) localization estimates an emitter position without the antenna complexity of angle-of-arrival sensing. In two dimensions with three non-collinear receivers, the analytical formulation reduces, after a rigid transformation of the geometry to a canonical frame, to a scalar quadratic in a reference range. Depending on the measurement and geometry, the quadratic yields one feasible target position, two feasible positions, or none. The two-feasible-root case is the focus of this article. One branch is the physical target, and the other is the second geometric preimage of the same reference-range quadratic. Both are exact roots. The second arises because the inverse mapping from range-difference space to Cartesian space is non-injective, not from any extraneous solution introduced by squaring. The ambiguity is intrinsic to the minimal three-receiver planar geometry, which admits dual solutions over a range of arrival-time differences, whereas four non-collinear receivers yield a unique solution \cite{spencer2007twod}. From a single measurement the two candidates carry no label.

Reference-free disambiguation has been approached in three ways. One re-solves the geometry across multiple epochs to eliminate the mirror by geometric convergence \cite{huo2021mirror}, which needs an over-determined multi-epoch solve, and a related multi-epoch scheme fuses time-difference observations over a satellite triad with Kalman and Gaussian-mixture filtering to resolve time-difference ambiguity \cite{zhang2025obsfilter}. Another enumerates and associates algebraic candidates across receiver subsets \cite{flood2024association}. A third augments the observation with an angle measurement \cite{dogancay2023conic,ho2023aoatdoa}, which removes the planar sign ambiguity at the cost of an additional sensing modality. The closed-form solver itself has been refined repeatedly. Recent work includes the classical hyperbolic estimator \cite{chanho1994}, the complete analytical solution \cite{hubacek2022}, asymptotically optimal range-difference estimation \cite{sun2023tdoaspeed}, a modified-polar-representation projection that unifies the near and far field \cite{sun2024mpr}, and an eigenvalue formulation that exposes the algebraic solution multiplicity \cite{larsson2025eigen}. In parallel, the planar TDoA-map image and bifurcation curve are characterized in \cite{compagnoni2014geometry,compagnoni2014bifurcation}, and optimal sensor placement is designed for the estimation accuracy of an already-resolved solution \cite{zhang2025osp}. Each returns the source position accurately once the branch is fixed, but none identifies the physical branch from the per-sample noise response of the analytical roots, causally and without re-solving the geometry across epochs or enumerating candidate trajectories, nor explains why such a criterion can exist. This article closes that gap with a single observation. The two roots respond differently to timing-measurement perturbations, and the physical root is the more sensitive one.

The contribution is not the quadratic solver or the TDoA-map geometry, which are treated as the mathematical foundation \cite{compagnoni2014geometry,compagnoni2014bifurcation,hubacek2022}. The contributions are:
\begin{enumerate}\renewcommand{\labelenumi}{(\roman{enumi})}
    \item a closed-form branch-sensitivity result that shows the physical root to be the more measurement-sensitive of the two, unlike the normalized conditioning of the quadratic root, and \emph{characterizes} this polarity by an exact closed-form sign condition, the more sensitive branch being the outer root away from two algebraic degeneracy loci.
    \item the observation that this sign condition justifies the simple outer-root (larger-range) rule as the branch selector on the admissible interior, together with a causal, constant-memory realization of that criterion, termed temporal-variability solution disambiguation (TVSD), that recovers the same selection online from the measurement stream, with an analysis showing its statistic is, in expectation, the per-root sensitivity under a noise-dominance condition.
    \item a sensitivity-polarity corollary stating that in the two-feasible-root interior, wherever the physical root is the higher-variance branch (the large majority of operating points, Sec.~\ref{sec:sens}), every smoothness-based selector mislabels it under noise. 
\end{enumerate}

\section{ANALYTICAL TDOA ROOT MODEL}
\emph{Notation.} Boldface denotes vectors and matrices, $[n]$ a discrete sample index, $\|\cdot\|$ the Euclidean norm, $\hat{(\cdot)}$ an estimate, $(\cdot)^T$ transpose, and subscripts index receivers or analytical roots. 
Let the receiver positions be $\mathbf{p}_i\in\mathbb{R}^2$, $i\in\{0,1,2\}$, with $\mathbf{p}_0$ the reference. At sample $n$ the target is $\mathbf{x}[n]$ and the propagation range to receiver $i$ is $r_i=\|\mathbf{x}[n]-\mathbf{p}_i\|$. Expressed in range units, the measurement vector is
\begin{equation}
  \mathbf{z}[n]=
  \begin{bmatrix} z_1[n]\\ z_2[n] \end{bmatrix}
  =
  \begin{bmatrix} r_1-r_0\\ r_2-r_0 \end{bmatrix}
  +\boldsymbol{\nu}[n],
  \label{eq:measurement_vector}
\end{equation}
where $\boldsymbol{\nu}[n]$ collects timing and preprocessing errors as range errors.

A rigid transformation with orthogonal $\mathbf{Q}$, $\mathbf{y}=\mathbf{Q}^{T}(\mathbf{x}-\mathbf{p}_0)$, $\mathbf{a}_i=\mathbf{Q}^{T}(\mathbf{p}_i-\mathbf{p}_0)$, places the reference at the origin and the receivers at $\mathbf{a}_0=\mathbf{0}$, $\mathbf{a}_1=[a,0]^{T}$, $\mathbf{a}_2=[b,c]^{T}$ with $a>0$, $c\neq0$, and defines the canonical reference range $\rho[n]=\|\mathbf{y}[n]\|=r_0$. Squaring the canonical TDoA equations $\|\mathbf{y}-\mathbf{a}_i\|-\rho=z_i$ and substituting $\rho^2=\mathbf{y}^{T}\mathbf{y}$ linearizes the geometry to an affine model $\mathbf{y}(\rho)=\boldsymbol{\alpha}+\boldsymbol{\beta}\rho$, whose coefficients depend on $(a,b,c)$ and $\mathbf{z}$ (the transformation and the coefficients are derived in the supplementary material, following \cite{hubacek2022}). Enforcing $\rho^2=\|\boldsymbol{\alpha}+\boldsymbol{\beta}\rho\|^2$ yields the scalar quadratic
\begin{equation}
  \lambda_2\rho^2+\lambda_1\rho+\lambda_0=0,
  \label{eq:quadratic}
\end{equation}
with $\lambda_2=\boldsymbol{\beta}^{T}\boldsymbol{\beta}-1$, $\lambda_1=2\boldsymbol{\alpha}^{T}\boldsymbol{\beta}$, $\lambda_0=\boldsymbol{\alpha}^{T}\boldsymbol{\alpha}$, and discriminant $\Delta_\rho=\lambda_1^2-4\lambda_2\lambda_0$. When $\lambda_2\neq0$ and $\Delta_\rho\geq0$ the analytical roots are
\begin{equation}
  \rho_m=\frac{-\lambda_1+(-1)^{m+1}\sqrt{\Delta_\rho}}{2\lambda_2},
  \qquad m\in\{1,2\},
  \label{eq:rho_roots}
\end{equation}
each mapping back to a candidate location $\mathbf{x}_m=\mathbf{p}_0+\mathbf{Q}(\boldsymbol{\alpha}+\boldsymbol{\beta}\rho_m)$. The per-sample root count classifies the inverse mapping, and the branch selector below is applied only when two distinct feasible (positive, real) roots are present.

\section{BRANCH SENSITIVITY ASYMMETRY}
\label{sec:sens}
The selection statistic is justified only if the two roots respond differently to measurement perturbations. Treating each root in \eqref{eq:rho_roots} as an implicit function of $\mathbf{z}$ through $\lambda_2(\mathbf{z})$, $\lambda_1(\mathbf{z})$, $\lambda_0(\mathbf{z})$, and differentiating \eqref{eq:quadratic} with respect to a component $z_j$ gives
\begin{equation}
  \frac{\partial\rho_m}{\partial z_j}=-\frac{g_j(\rho_m)}{2\lambda_2\rho_m+\lambda_1},
  \quad
  g_j(\rho)=\tfrac{\partial\lambda_2}{\partial z_j}\rho^2+\tfrac{\partial\lambda_1}{\partial z_j}\rho+\tfrac{\partial\lambda_0}{\partial z_j}.
  \label{eq:implicit_sens}
\end{equation}
By \eqref{eq:rho_roots}, the denominator evaluates to $2\lambda_2\rho_m+\lambda_1=(-1)^{m+1}\sqrt{\Delta_\rho}$, so
\begin{equation}
  \frac{\partial\rho_m}{\partial z_j}=(-1)^{m}\,\frac{g_j(\rho_m)}{\sqrt{\Delta_\rho}}.
  \label{eq:branch_sensitivity}
\end{equation}
The magnitude of the denominator $\sqrt{\Delta_\rho}$ is \emph{identical} for both roots. The normalized conditioning of the analytical root therefore cannot, by itself, distinguish the physical branch from the spurious one. This is an exact algebraic identity, and the branch asymmetry resides entirely in the numerator $g_j(\rho_m)$, evaluated at the two distinct root values. Collecting the two measurement components into $\mathbf{g}(\rho)=[\,g_1(\rho),\,g_2(\rho)\,]^{T}$, the per-root measurement sensitivity is $\varsigma_m=\|\nabla_{\mathbf z}\rho_m\|=\|\mathbf{g}(\rho_m)\|/\sqrt{\Delta_\rho}$, and the discriminating quantity is the ratio $\Gamma=\varsigma_{\mathrm{phys}}/\varsigma_{\mathrm{sp}}$. The common factor $\sqrt{\Delta_\rho}$ cancels, so $\Gamma=\|\mathbf{g}(\rho_{\mathrm{phys}})\|/\|\mathbf{g}(\rho_{\mathrm{sp}})\|$ depends only on geometry and the operating point.

That this numerator favors the physical branch rather than the spurious one is not incidental. It admits an exact closed-form criterion.

\begin{proposition}\label{prop:polarity}
Let the two feasible roots $\rho_{\mathrm{phys}},\rho_{\mathrm{sp}}$ be distinct and positive, so that $\Delta_\rho>0$. Since $\lambda_0=\boldsymbol{\alpha}^{T}\boldsymbol{\alpha}\geq0$, also $\lambda_2>0$. Then
\begin{equation}
  \Gamma>1 \iff (\rho_{\mathrm{phys}}-\rho_{\mathrm{sp}})\,Q>0,
  \label{eq:signlaw}
\end{equation}
where, with the coefficient gradients $\nabla_{\mathbf z}\lambda_k=[\,\partial_{z_1}\lambda_k,\,\partial_{z_2}\lambda_k\,]^{T}$ and the Vieta sum and product $S=-\lambda_1/\lambda_2$, $P=\lambda_0/\lambda_2$,
\begin{equation}
  \begin{aligned}
  Q={}&\|\nabla_{\mathbf z}\lambda_2\|^{2}\,S(S^{2}-2P)+2\,(\nabla_{\mathbf z}\lambda_2\!\cdot\!\nabla_{\mathbf z}\lambda_1)(S^{2}-P)\\
  &+\bigl(\|\nabla_{\mathbf z}\lambda_1\|^{2}+2\,\nabla_{\mathbf z}\lambda_2\!\cdot\!\nabla_{\mathbf z}\lambda_0\bigr)\,S+2\,\nabla_{\mathbf z}\lambda_1\!\cdot\!\nabla_{\mathbf z}\lambda_0 .
  \end{aligned}
  \label{eq:Q}
\end{equation}
\end{proposition}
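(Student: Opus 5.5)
The plan is to compare squared norms. Since $\sqrt{\Delta_\rho}>0$ is common to both roots by \eqref{eq:branch_sensitivity}, and assuming $\mathbf{g}(\rho_{\mathrm{sp}})\neq\mathbf{0}$ so that $\Gamma$ is defined, we have
\[
\Gamma>1 \iff D:=\|\mathbf{g}(\rho_{\mathrm{phys}})\|^{2}-\|\mathbf{g}(\rho_{\mathrm{sp}})\|^{2}>0 .
\]
The proof then reduces to showing the polynomial identity $D=(\rho_{\mathrm{phys}}-\rho_{\mathrm{sp}})\,Q$.

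First I would settle the preliminary claim on the sign of $\lambda_2$. By the standing assumption $\lambda_2\neq0$, so Vieta's relations $S=\rho_{\mathrm{phys}}+\rho_{\mathrm{sp}}$ and $P=\rho_{\mathrm{phys}}\rho_{\mathrm{sp}}$ are valid. Both roots are positive, so $P>0$. Hence $\lambda_0\neq0$, and therefore $\lambda_0=\boldsymbol{\alpha}^{T}\boldsymbol{\alpha}>0$. It follows that $\lambda_2=\lambda_0/P>0$. This fact is not needed for the equivalence itself, but it records that $S,P>0$ on the feasible interior.

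Next I would write $\mathbf{u}=\nabla_{\mathbf z}\lambda_2$, $\mathbf{v}=\nabla_{\mathbf z}\lambda_1$ and $\mathbf{w}=\nabla_{\mathbf z}\lambda_0$. Then $\mathbf{g}(\rho)=\mathbf{u}\rho^{2}+\mathbf{v}\rho+\mathbf{w}$, and expanding the norm gives
\[
\|\mathbf{g}(\rho)\|^{2}=\|\mathbf{u}\|^{2}\rho^{4}+2(\mathbf{u}\!\cdot\!\mathbf{v})\rho^{3}+\bigl(\|\mathbf{v}\|^{2}+2\,\mathbf{u}\!\cdot\!\mathbf{w}\bigr)\rho^{2}+2(\mathbf{v}\!\cdot\!\mathbf{w})\rho+\|\mathbf{w}\|^{2}.
\]
Taking the difference at $r_1=\rho_{\mathrm{phys}}$ and $r_2=\rho_{\mathrm{sp}}$, the constant term cancels, and each power difference $r_1^{k}-r_2^{k}$ factors as $(r_1-r_2)h_k$. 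Here $h_k$ is the complete homogeneous symmetric polynomial of degree $k-1$, which I would express through $S$ and $P$:
\[
h_4=(r_1+r_2)(r_1^{2}+r_2^{2})=S(S^{2}-2P),\quad h_3=r_1^{2}+r_1r_2+r_2^{2}=S^{2}-P,\quad h_2=S,\quad h_1=1.
\]
Substituting these yields $D=(r_1-r_2)\,Q$ with $Q$ exactly as in \eqref{eq:Q}. This gives \eqref{eq:signlaw}.

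The main obstacle is not the algebra, which is a routine symmetric-function reduction. It is the bookkeeping of hypotheses. The ratio $\Gamma$ must be well defined, which needs $\mathbf{g}(\rho_{\mathrm{sp}})\neq\mathbf{0}$; this should be stated as nondegeneracy or treated by the convention $\Gamma=\infty$. The roots must also be distinct, so that $r_1-r_2\neq0$ and the sign of $D$ is carried entirely by $Q$. Finally, I would note that $Q=0$ is one of the algebraic degeneracy loci mentioned in the abstract, where the two sensitivities coincide.
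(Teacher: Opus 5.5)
Your proof is correct and follows the paper's argument: compare the quartic $\|\mathbf{g}(\rho)\|^{2}$ at the two roots, factor out $\rho_{\mathrm{phys}}-\rho_{\mathrm{sp}}$, and rewrite the symmetric cofactor in terms of $S$ and $P$. You carry this out term by term, and you also justify $\lambda_2>0$ via $P>0$, which the paper only asserts. One small correction to your closing aside: the two degeneracy loci the paper refers to are the bifurcation curve $\Delta_\rho=0$ and the divergence locus $\lambda_2=0$, not $Q=0$; the paper instead reports $Q>0$ at every sampled interior point.
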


\emph{Proof.} By \eqref{eq:implicit_sens} each $g_j$ is a quadratic in $\rho$, so $h(\rho)=\|\mathbf{g}(\rho)\|^{2}$ is a quartic whose coefficients are fixed by the geometry and operating point, with leading coefficient $\|\nabla_{\mathbf z}\lambda_2\|^{2}$ independent of the root index. The difference $h(\rho_{\mathrm{phys}})-h(\rho_{\mathrm{sp}})$ therefore factors as $(\rho_{\mathrm{phys}}-\rho_{\mathrm{sp}})$ times a polynomial symmetric in the two roots. Expressing that polynomial through the invariants $S$ and $P$ yields $h(\rho_{\mathrm{phys}})-h(\rho_{\mathrm{sp}})=(\rho_{\mathrm{phys}}-\rho_{\mathrm{sp}})\,Q$. Since $\Gamma>1\Leftrightarrow h(\rho_{\mathrm{phys}})>h(\rho_{\mathrm{sp}})$, \eqref{eq:signlaw} follows. \hfill$\square$

The leading term $\|\nabla_{\mathbf z}\lambda_2\|^{2}\,S(S^{2}-2P)=\|\nabla_{\mathbf z}\lambda_2\|^{2}\,S(\rho_{\mathrm{phys}}^{2}+\rho_{\mathrm{sp}}^{2})$ is non-negative, and $Q>0$ holds throughout the two-feasible-root interior (verified at all $2.3\times10^{6}$ sampled operating points). The more measurement-sensitive branch is thus the \emph{outer} (larger reference-range) root (Fig.~\ref{fig:polarity}(a)), and $\Gamma>1$ reduces to the physical root being the outer one. In the far field the leading term dominates and \eqref{eq:Q} gives $\Gamma\to(|\rho_{\mathrm{phys}}|/|\rho_{\mathrm{sp}}|)^{2}$. The physical root is the outer root except near the bifurcation curve $\Delta_\rho\to0$, where the roots coalesce, and the divergence locus $\lambda_2\to0$, where the leading coefficient of \eqref{eq:quadratic} vanishes \cite{kappagdop2026} so that one root tends to the finite limit $-\lambda_0/\lambda_1$ while the other diverges and inverts the polarity. The residual $\Gamma<1$ points form a thin set around these loci (Fig.~\ref{fig:polarity}(b)). This polarity is therefore a closed-form property of the analytical map, decided exactly by the sign of $Q$, and Proposition~\ref{prop:polarity} isolates the only regime, the neighborhood of the divergence locus, in which the single-epoch rule must be relinquished.

Two consequences fix the branch rule the asymmetry endorses. Because $Q>0$ throughout the interior, selecting the higher-sensitivity root is exactly the outer-root rule, choosing $\rho_{\mathrm{out}}=\max(\rho_1,\rho_2)$. That selection returns the physical branch if and only if the physical preimage is the outer one, $\rho_{\mathrm{phys}}>\rho_{\mathrm{sp}}$. On the thin inner-sheet minority where $\rho_{\mathrm{phys}}<\rho_{\mathrm{sp}}$, concentrated near the loci, the same rule returns the spurious branch. The data-only resolution is therefore to be read on this admissible interior, where the sensitivity polarity and the physical branch coincide. 

\begin{figure*}[t]
  \centering
  \includegraphics[width=\linewidth]{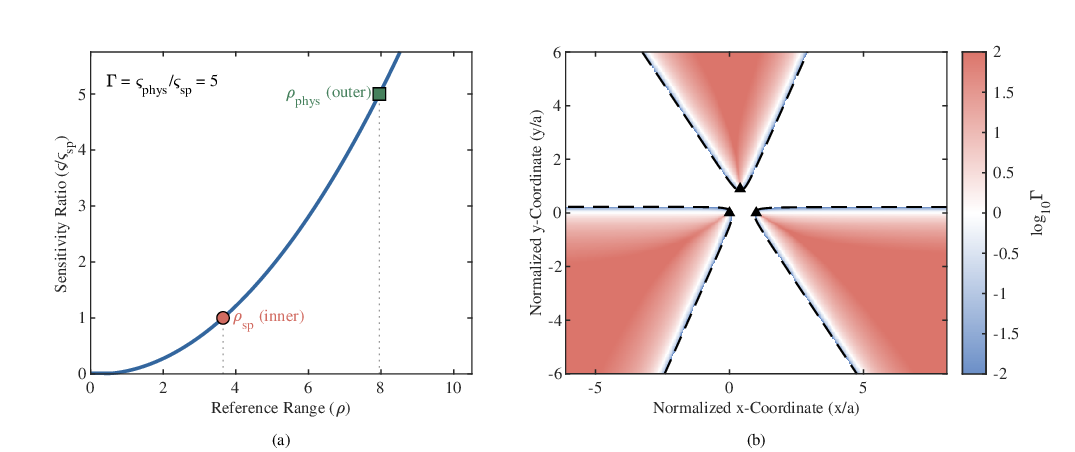}
  \caption{Sensitivity polarity (Proposition~\ref{prop:polarity}). (a) Per-root sensitivity $\varsigma(\rho)=\|\mathbf{g}(\rho)\|/\sqrt{\Delta_\rho}$ against reference range for the two roots of one operating point ($\Gamma=5$, $\varsigma$ normalized to the inner root). (b) $\log_{10}\Gamma$ over one geometry's source plane, $\Gamma>1$ in red and the $\Gamma<1$ minority in blue, with the bifurcation ($\Delta_\rho=0$, solid) and divergence ($\lambda_2=0$, dashed) loci and the receivers (triangles) marked.}
  \label{fig:polarity}
\end{figure*}

Consistent with this characterization, evaluating $\Gamma$ across a dimensionless atlas of $1558$ receiver geometries parameterized by the canonical-frame ratios $b/a$ and $c/a$ (Fig.~\ref{fig:atlas}) shows the physical-root dominance to be array-independent. The across-geometry median of $P(\Gamma>1)$ is $0.85$ and the median $\Gamma$ is $16$, with the sub-unity minority concentrated near the two loci, as Proposition~\ref{prop:polarity} predicts. The closed-form sensitivity \eqref{eq:branch_sensitivity} agrees with central finite differences to a median relative error of $1.8\times10^{-10}$ over $5834$ checks (supplementary material). The instantaneous fraction $P(\Gamma>1)$ is a conservative floor rather than the achievable accuracy. Because the median ratio $\Gamma=16$ is large, the smoothed statistic is dominated by the physical root within a few samples, so the integrated selector exceeds the per-point fraction and converges toward unity (Section~\ref{sec:validation}).

\begin{figure}[t]
  \centering
  \includegraphics[width=\linewidth]{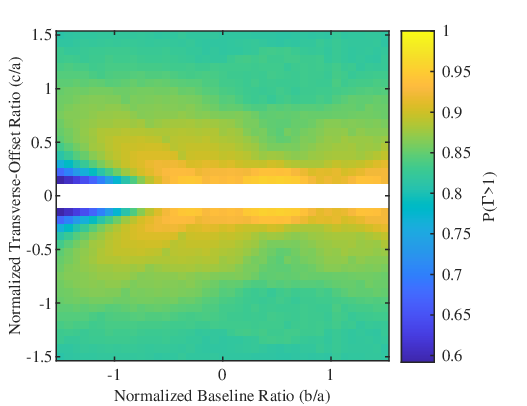}
  \caption{Branch-sensitivity asymmetry over a dimensionless atlas of $1558$ receiver geometries. The white band near $c/a=0$ marks excluded near-collinear arrays.}
  \label{fig:atlas}
\end{figure}

\section{CAUSAL SELECTOR AND SMOOTHNESS-INVERSION COROLLARY}
\label{sec:selector}
The polarity of Sec.~\ref{sec:sens} is a property of the noise-free operating point. This section constructs a causal statistic whose expectation recovers it from the measurement stream.
Let $\rho_1[n]$ and $\rho_2[n]$ be the formula-ordered roots of \eqref{eq:rho_roots}, taken with the positive and negative discriminant terms. For each, define the sampled root variation and its exponentially weighted moving average,
\begin{equation}
  \begin{aligned}
  d_m[n]&=\frac{|\rho_m[n]-\rho_m[n-1]|}{\max(\Delta t_n,\epsilon_t)},\\
  s_m[n]&=\eta\, s_m[n-1]+(1-\eta)d_m[n],
  \end{aligned}
  \label{eq:ema}
\end{equation}
with $0\leq\eta<1$ ($\eta=0.9$ here), where $\Delta t_n$ is the inter-sample interval and $\epsilon_t>0$ a small floor preventing division by zero. Select the more variable branch,
\begin{equation}
  \hat{m}[n]=\arg\max_{m\in\{1,2\}}s_m[n].
  \label{eq:branchchoice}
\end{equation}
To interpret $d_m[n]$, expand each root to first order about the noise-free
measurement. With $\boldsymbol{\nu}[n]$ the timing error in
\eqref{eq:measurement_vector} and $\nabla_{\mathbf z}\rho_m$ the closed-form root
gradient of Sec.~\ref{sec:sens},
\begin{equation}
  \rho_m[n]\approx\rho_m(\mathbf z[n]-\boldsymbol{\nu}[n])
  +(\nabla_{\mathbf z}\rho_m[n])^{T}\boldsymbol{\nu}[n].
  \label{eq:rho_expand}
\end{equation}
At a regular operating point the gradient varies slowly across two consecutive
samples, $\nabla_{\mathbf z}\rho_m[n]\approx\nabla_{\mathbf z}\rho_m[n-1]$, so
differencing \eqref{eq:rho_expand} between samples $n$ and $n-1$ splits the
step-to-step change into a noiseless and a noise part,
\begin{equation}
  \rho_m[n]-\rho_m[n-1]\approx\delta_m[n]
  +(\nabla_{\mathbf z}\rho_m)^{T}(\boldsymbol{\nu}[n]-\boldsymbol{\nu}[n-1]),
  \label{eq:rho_diff}
\end{equation}
where $\delta_m[n]=\rho_m(\mathbf z[n]-\boldsymbol{\nu}[n])
-\rho_m(\mathbf z[n-1]-\boldsymbol{\nu}[n-1])$ is the noiseless displacement of
root $m$. Dividing by $\max(\Delta t_n,\epsilon_t)$ and taking the magnitude,
\begin{equation}
  d_m[n]\approx\frac{\bigl|\delta_m[n]+(\nabla_{\mathbf z}\rho_m)^{T}
  (\boldsymbol{\nu}[n]-\boldsymbol{\nu}[n-1])\bigr|}{\max(\Delta t_n,\epsilon_t)},
  \label{eq:dm_decomp}
\end{equation}
so the sampled variability mixes a source-motion term $\delta_m[n]$ with a noise
term whose magnitude is set by the root sensitivity $\nabla_{\mathbf z}\rho_m$.

Model the timing error as zero mean, independent from sample to sample, and
Gaussian with covariance $\sigma_\nu^{2}\mathbf{I}_2$, where $\sigma_\nu$ is the
timing standard deviation expressed in range units (Assumption~1), that is,
\begin{equation}
  \mathbb{E}\{\boldsymbol{\nu}[n]\}=\mathbf{0},\qquad
  \mathbb{E}\{\boldsymbol{\nu}[n]\boldsymbol{\nu}[n]^{T}\}
  =\sigma_\nu^{2}\mathbf{I}_2.
  \label{eq:noise_model}
\end{equation}
The first difference $\boldsymbol{\nu}[n]-\boldsymbol{\nu}[n-1]$ is then zero mean
with covariance $2\sigma_\nu^{2}\mathbf{I}_2$, so the scalar noise term in
\eqref{eq:dm_decomp} is zero mean with variance
\begin{equation}
  \mathrm{Var}\bigl((\nabla_{\mathbf z}\rho_m)^{T}
  (\boldsymbol{\nu}[n]-\boldsymbol{\nu}[n-1])\bigr)
  =2\sigma_\nu^{2}\|\nabla_{\mathbf z}\rho_m\|^{2}.
  \label{eq:noise_var}
\end{equation}
When this noise term dominates $|\delta_m[n]|$ (the noise-dominance condition),
$d_m[n]$ is the normalized magnitude of a single zero-mean Gaussian.

\begin{proposition}\label{prop:expvar}
At a regular two-feasible-root operating point away from the bifurcation and
divergence loci, under Assumption~1 and the noise-dominance condition, to first
order
\begin{equation}
  \mathbb{E}\{d_m[n]\}\approx
  \frac{2\sigma_\nu}{\sqrt{\pi}\,\max(\Delta t_n,\epsilon_t)}\,\varsigma_m,
  \label{eq:expected_dm}
\end{equation}
all quantities evaluated at sample $n$. Under the same conditions,
$\mathbb{E}\{d_1[n]\}>\mathbb{E}\{d_2[n]\}\iff\varsigma_1>\varsigma_2
\iff\|\mathbf{g}(\rho_1)\|>\|\mathbf{g}(\rho_2)\|$, since both roots share
$\Delta t_n$ and $\sqrt{\Delta_\rho}$.
\end{proposition}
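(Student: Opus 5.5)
The plan is to read the result off the first-order decomposition \eqref{eq:dm_decomp} together with the half-normal mean identity.

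\emph{Step 1 (reduction).} Under the noise-dominance condition we drop $\delta_m[n]$ from \eqref{eq:dm_decomp}. We also use that the gradient is locally constant, $\nabla_{\mathbf z}\rho_m[n]\approx\nabla_{\mathbf z}\rho_m[n-1]$, which is legitimate at a regular point away from the loci. This gives
\[
d_m[n]\approx\frac{|X_m|}{\max(\Delta t_n,\epsilon_t)},\qquad X_m=(\nabla_{\mathbf z}\rho_m)^{T}(\boldsymbol{\nu}[n]-\boldsymbol{\nu}[n-1]).
\]
By Assumption~1 the difference $\boldsymbol{\nu}[n]-\boldsymbol{\nu}[n-1]$ is Gaussian. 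Its mean is zero and its covariance is $2\sigma_\nu^2\mathbf I_2$ by independence. So $X_m$, a fixed linear functional of it, is $\mathcal N(0,2\sigma_\nu^2\|\nabla_{\mathbf z}\rho_m\|^2)$, exactly as in \eqref{eq:noise_var}.

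\emph{Step 2 (expectation).} For $X\sim\mathcal N(0,s^2)$ the half-normal mean is $\mathbb E|X|=s\sqrt{2/\pi}$. We compute this directly as $2\int_0^\infty x\,e^{-x^2/2s^2}\,dx/(s\sqrt{2\pi})$. With $s=\sqrt2\,\sigma_\nu\varsigma_m$, where $\varsigma_m=\|\nabla_{\mathbf z}\rho_m\|$ as defined in Sec.~\ref{sec:sens}, we get $\mathbb E|X_m|=2\sigma_\nu\varsigma_m/\sqrt\pi$. The denominator $\max(\Delta t_n,\epsilon_t)$ is deterministic, so dividing by it gives \eqref{eq:expected_dm}.

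\emph{Step 3 (ordering).} The prefactor $2\sigma_\nu/(\sqrt\pi\max(\Delta t_n,\epsilon_t))$ is positive and common to $m=1,2$. Hence $\mathbb E d_1>\mathbb E d_2\iff\varsigma_1>\varsigma_2$. By \eqref{eq:branch_sensitivity}, $\varsigma_m=\|\mathbf g(\rho_m)\|/\sqrt{\Delta_\rho}$. The $\sqrt{\Delta_\rho}$ is shared by both roots and is strictly positive because the roots are distinct, so it cancels. This gives the final equivalence with $\|\mathbf g(\rho_1)\|>\|\mathbf g(\rho_2)\|$.

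\emph{Main obstacle.} The delicate step is justifying that dropping $\delta_m[n]$ and the higher-order remainder of \eqref{eq:rho_expand} leaves only an $o(\sigma_\nu)$ error in the expectation. The map $x\mapsto|x|$ is 1-Lipschitz, so I would bound the discrepancy by $\mathbb E|\delta_m[n]+R_m|$, where $R_m$ collects the second-order Taylor terms. Away from the loci $\Delta_\rho$ and $\lambda_2$ are bounded below, so the Hessian of $\rho_m$ is bounded and $\mathbb E|R_m|=O(\sigma_\nu^2)$. The term $|\delta_m|$ is by hypothesis small relative to $\sigma_\nu\varsigma_m$. This makes ``$\approx$'' precise to first order. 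The ordering in Step~3 then holds strictly whenever $|\varsigma_1-\varsigma_2|$ exceeds these remainders.
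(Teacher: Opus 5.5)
Your proposal is correct and follows essentially the same route as the paper's proof. You drop $\delta_m[n]$ under noise dominance, treat $d_m[n]$ as the scaled magnitude of a zero-mean Gaussian with variance $2\sigma_\nu^2\|\nabla_{\mathbf z}\rho_m\|^2$, apply $\mathbb{E}|X|=\sqrt{2/\pi}\,\sigma_X$, and cancel the common prefactor and $\sqrt{\Delta_\rho}$ to get the ordering; your Lipschitz-based remainder control goes beyond the paper, which stops at first order, but to make it rigorous you would need to handle the unbounded Gaussian tails (e.g.\ condition on the noisy measurements staying in the two-feasible-root region), because the local Hessian bound behind $\mathbb{E}|R_m|=O(\sigma_\nu^2)$ does not hold globally.
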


\emph{Proof.} Under the noise-dominance condition \eqref{eq:dm_decomp} reduces
to $d_m[n]\approx|(\nabla_{\mathbf z}\rho_m)^{T}(\boldsymbol{\nu}[n]
-\boldsymbol{\nu}[n-1])|/\max(\Delta t_n,\epsilon_t)$, the normalized magnitude of
the zero-mean Gaussian whose variance is \eqref{eq:noise_var}. For
$X\sim\mathcal{N}(0,\sigma_X^{2})$ one has $\mathbb{E}|X|=\sqrt{2/\pi}\,\sigma_X$,
so
\begin{equation}
  \begin{aligned}
  \mathbb{E}\{d_m[n]\}
  &\approx\frac{1}{\max(\Delta t_n,\epsilon_t)}\sqrt{\tfrac{2}{\pi}}\,
  \sqrt{2\sigma_\nu^{2}\|\nabla_{\mathbf z}\rho_m\|^{2}}\\
  &=\frac{2\sigma_\nu}{\sqrt{\pi}\,\max(\Delta t_n,\epsilon_t)}\,
  \|\nabla_{\mathbf z}\rho_m\|.
  \end{aligned}
  \label{eq:proof_step}
\end{equation}
Substituting $\|\nabla_{\mathbf z}\rho_m\|=\varsigma_m$ from Sec.~\ref{sec:sens}
gives \eqref{eq:expected_dm}. The biconditional follows because the prefactor and
$\sqrt{\Delta_\rho}$ are common to both roots.\hfill$\square$

On the admissible interior of Sec.~\ref{sec:sens}, where the physical root is
the outer one and $Q>0$ makes the outer root the more sensitive, this expected
ordering therefore points at the physical branch. Like the polarity itself, the
approximation degrades near the bifurcation and divergence loci, where the
first-order expansion no longer holds. Numerically, over the admissible interior
the two sides of \eqref{eq:expected_dm} agree to within five percent and the
expected and realized variability orderings coincide at $99.5\%$ of sampled
operating points. When the
statistics of $d_m[n]$ vary slowly over the averaging window the exponential
moving average preserves the mean,
$\mathbb{E}\{s_m[n]\}\approx\mathbb{E}\{d_m[n]\}$, so \eqref{eq:ema} carries the
expected ordering to $s_m[n]$ while reducing its variance. For independent
increments
\begin{equation}
  \mathrm{Var}(s_m[n])\approx\frac{1-\eta}{1+\eta}\,\mathrm{Var}(d_m[n]),
  \label{eq:ema_var}
\end{equation}
a factor of $19$ at $\eta=0.9$. Because consecutive increments share
$\boldsymbol{\nu}[n]$, so that the first differences form a first-order moving
average, the achieved reduction is smaller and \eqref{eq:ema_var} is an upper
bound.
This variance reduction is what smoothing buys
(Sec.~\ref{sec:validation}). When the noiseless step $\delta_m[n]$ is comparable
to the noise term in \eqref{eq:dm_decomp}, the expected ordering of
Proposition~\ref{prop:expvar} is no longer guaranteed. The noiseless end of the
validation curves in Sec.~\ref{sec:validation} shows this regime. The rule is causal and uses constant memory per receiver triplet. It operates on the sign-ordered traces directly. This ordering is continuous through the discriminant zero $\Delta_\rho\to0$, where both roots coincide and $d_m[n]$ stays finite, whereas a magnitude-sort labeling would inject an artificial one-sample jump whenever the roots cross in magnitude.

Propositions~\ref{prop:polarity} and~\ref{prop:expvar} have an immediate corollary that inverts the conventional intuition. On the admissible interior the physical root carries the larger expected variability, so under timing noise it traces the higher-variance arc while the spurious root traces a tighter, lower-variance one. The single per-root variance therefore carries two opposite consequences. Selecting the \emph{larger} variability, as in \eqref{eq:branchchoice}, locks onto the physical branch, whereas any rule that prefers the \emph{smoother} or more temporally consistent candidate, including the trajectory-energy benchmark of \cite{samizadehnikoo2019} and a constant-velocity continuity rule, must select the spurious one once noise dominates. The two read the same variance signal with opposite polarity, which is why the temporal-variability selector improves with noise exactly where smoothness-based disambiguation collapses.

\section{VALIDATION}
\label{sec:validation}
The selector was evaluated against two smoothness-based references by Monte Carlo over randomized two-feasible-root crossing geometries, with $1000$ trajectories at each timing-noise level $\sigma_t\in\{0,5,10,20,50\}$~ns. The references are the trajectory-energy method of \cite{samizadehnikoo2019}, which selects the candidate trajectory of minimum total absolute energy-change rate, and a constant-velocity continuity selector, which selects the candidate most consistent with constant-velocity motion. Fig.~\ref{fig:mc} shows the interior branch-selection accuracy. TVSD rises from $0.82$ at $0$~ns and saturates near $0.98$ once $\sigma_t\geq5$~ns. This is the profile Proposition~\ref{prop:expvar} predicts. At $0$~ns the noise-dominance condition fails and $d_m[n]$ reflects source motion rather than sensitivity, while under noise the expected ordering takes hold. Both smoothness-based selectors are accurate without noise but collapse far below the chance line once timing noise is present, the trajectory-energy method to near $0.01$ and the constant-velocity selector from $0.92$ to near $0.03$, exactly as the corollary predicts, since the smoothest candidate is the spurious one once noise dominates the interior.

\begin{figure}[t]
  \centering
  \includegraphics[width=\linewidth]{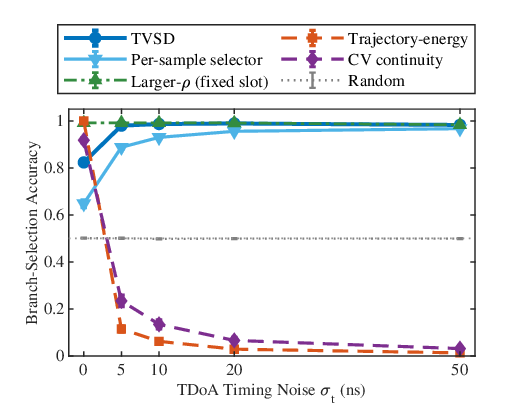}
  \caption{Monte Carlo branch-selection accuracy versus TDoA timing noise in the two-feasible-root interior (mean and 95\% confidence interval, $1000$ trajectories per level). Series: TVSD; the larger-$\rho$ (outer-root) rule; a memoryless per-sample selector; and two smoothness-based selectors, the trajectory-energy method \cite{samizadehnikoo2019} and a constant-velocity continuity rule. Random marks the $0.5$ chance line.}
  \label{fig:mc}
\end{figure}

Two elementary branch rules invite comparison beyond the smoothness references. The first selects the larger reference range $\rho_{\mathrm{out}}$. By Proposition~\ref{prop:polarity} this outer-root rule is exactly the fixed analytical-slot (oracle) choice in this interior, since $Q>0$ makes the $+\sqrt{\Delta_\rho}$ slot the larger root, and it sits near $0.99$. The second selects, at each sample, the root of larger instantaneous sensitivity $\|\mathbf{g}(\rho_m)\|$. Because $Q>0$ forces the more sensitive root to be the outer one, this rule is identical to the outer-root rule and traces the same curve. This coincidence is the paper's point rather than a baseline to be beaten. The sensitivity asymmetry is what makes the simple outer-root rule correct on the admissible interior, and the atlas of Fig.~\ref{fig:atlas} shows the coincidence holds across geometries, so the rule is theory-justified rather than tuned to one array. TVSD realizes the same selection causally and with constant memory, as an online statistic whose expectation is the per-root sensitivity (Proposition~\ref{prop:expvar}), and matches the outer-root rule near 0.98. A per-sample selector that reads the root variability from a single time step, with no memory, is markedly less accurate, rising from $0.65$ to $0.97$ yet remaining below TVSD at every noise level (Fig.~\ref{fig:mc}). The gap is consistent with the variance reduction of the smoothed statistic (Sec.~\ref{sec:selector}), so the temporal integration is essential to the selector. The analytical solver, the selector, and the scripts that generate the Monte Carlo and atlas results are fully reproducible and will be released publicly upon publication. This article uses no proprietary data.

\section{CONCLUSION}

This article gives a closed-form characterization of the branch-selection ambiguity in minimal-array analytical TDoA, rather than merely observing it. An explicit sign quantity $Q$ (Proposition~\ref{prop:polarity}) decides which branch is the more measurement-sensitive, and over the feasible interior this is the outer, physical root away from the bifurcation and divergence loci, which makes the simple outer-root rule a theorem-backed selector with a known failure map. The causal, constant-memory selector then reads that polarity directly from the measurement stream, its variability statistic being proportional, in expectation and under a noise-dominance condition, to the per-root sensitivity (Proposition~\ref{prop:expvar}). It realizes this selection causally and online, matching the outer-root rule near 0.98 under timing noise. The same sensitivity analysis thus both supplies the discriminant and explains, through the smoothness-inversion corollary, why prior smoothness- and continuity-based rules mislabel the physical branch exactly where noise makes the problem hardest. Three directions remain open. One embeds the selector in a downstream tracker that carries residual ambiguity as a soft hypothesis rather than a hard decision. Another restores selection near the divergence locus, where the polarity inverts, through a multi-epoch continuity rule. The last is a closed-form proof of the positivity $Q>0$, here established numerically over the feasible interior, which would render the characterization unconditional.

\end{document}